\documentclass{article}
\usepackage[latin1]{inputenc}
 \usepackage{amsmath,amssymb,amsfonts,amsthm,fullpage}




\newtheorem{theorem}{Theorem} 
\newtheorem{lemma}{Lemma} 


\newcommand{\suppress}[1]{}
\newcommand{\comment}[1]{}


\newcommand{\eps}{\varepsilon}
\renewcommand{\epsilon}{\varepsilon}

\newcommand{\PLS}{\mathsf{PLS}}
\newcommand{\NP}{\mathsf{NP}}
\newcommand{\TFNP}{\mathsf{TFNP}}

\newcommand{\coNP}{\mathsf{coNP}}

\newcommand{\beq}{\begin{equation}}
\newcommand{\eeq}{\end{equation}}
\newcommand{\beqa}{\begin{eqnarray}}
\newcommand{\eeqa}{\end{eqnarray}}

\begin{document}

\title{The complexity of approximate Nash equilibrium \\
in congestion games with negative delays%
\thanks{Partially supported by the French ANR Defis program
under contract ANR-08-EMER-012 (QRAC project)
and
the European Commission IST STREP Project
Quantum Computer Science (QCS) 25596.}
}
\date{}

\author{
Fr\'ed\'eric Magniez\thanks{LIAFA, Univ. Paris 7, CNRS; F-75205 Paris, France.
\texttt{magniez@liafa.jussieu.fr}}
\and
Michel de Rougemont\thanks{Univ. Paris 2, and LIAFA, Univ. Paris 7, CNRS; F-75205 Paris, France.
\texttt{mdr@liafa.jussieu.fr}}
\and
Miklos Santha\thanks{LIAFA, Univ. Paris 7, CNRS;
F-75205 Paris, France;  and
Centre for Quantum Technologies,
National University of Singapore, Singapore 117543. \texttt{santha@lri.fr}}
\and
Xavier Zeitoun\thanks{Univ. Paris 11, and LIAFA, Univ. Paris 7, CNRS; F-75205 Paris, France.
\texttt{xavier.zeitoun@lri.fr}}
}

\maketitle

\begin{abstract}
We extend the study of the complexity of finding an $\eps$-approximate Nash equilibrium in  congestion games from the case of positive delay functions to delays of arbitrary sign. 
We first prove that  in symmetric games with $\alpha$-bounded jump the $\eps$-Nash dynamic  converges in polynomial time when all delay functions are negative, similarly to the case of positive delays.
We then establish a hardness result for symmetric games with $\alpha$-bounded jump and with arbitrary  delay functions: in that case finding an  $\eps$-Nash equilibrium becomes $\PLS$-complete.
\end{abstract}


\section{Introduction}
Congestion games were introduced by Rosenthal~\cite{R73} to model shared resources by selfish players.
In these games the strategies of each player correspond to some collection of subsets of a given set of common resources.
The cost of a strategy is the sum of the costs of the selected resources, where the cost of a particular resource depends on the 
number of players having chosen this resource. This dependence is described in the specification of
the game by a non-decreasing delay function for each resource.

Congestion games can describe several interesting routing and resource allocation scenarios in networks. 
More importantly from a game theoretic perspective, they have 
some particularly attractive properties. Rosenthal has proven that they belong to the class of
potential games where, for each player, an improvement (decrease) in his cost is reflected by an improvement
in a global
function, the potential function. 
This implies, in particular, that congestion games always have a pure Nash equilibrium.
More precisely, a Nash equilibrium can be reached by the so called Nash dynamics, in which an unsatisfied player switches his strategy
to a better one, which decreases his cost function. Since the same improvement is mirrored in the potential function, which can not be decreased
infinitely, this process indeed has to converge to an equilibrium in a finite number of steps. In an exact potential game the changes in the
individual cost functions and the potential function are not only identical in sign, but also in the exact value.
Monderer and Shapley~\cite{MS96} have proved
that congestion games and exact potential games are equivalent.

The existence of a potential function for congestion games allows us to cast searching
for a Nash equilibrium as a local search problem. The states, that is the strategy profiles of the players, 
are the feasible solutions,
and the neighborhood of a state consists of all authorized changes in the strategy of a single player. Then local
optima correspond to states where no player can improve individually his cost, that is exactly to Nash equilibria.
The potential of a state can be evaluated in polynomial time, and similarly
a neighboring state of lower potential can be exhibited, provided that there exists one. This means that the problem of finding a
Nash equilibrium in a congestion game belongs to the complexity class $\PLS$, Polynomial Local Search, 
defined in~\cite{JPY88, PY88}.
The class $\PLS$ is a subclass of $\TFNP$~\cite{MP91},  the family of $\NP$ search problems for which a solution is guaranteed to exist.
While $\PLS$ is not harder than $\NP \cap \coNP$, 
it is widely believed to be computationally intractable.
It contains several complete problems, such as the weighted 2SAT-FLIP, where one looks for
a truth assignment that maximizes the sum of the weights of satisfied clauses
among all assignments of Hamming distance one.
Fabrikant, Papadimitriou and Talwar~\cite{FPT04} have shown that finding a Nash equilibrium in congestion games is also $\PLS$-complete.
In addition, they have explicitly constructed games in which the Nash dynamics takes
exponential time to converge. It is worth to note that it is also highly unlikely
that computing a mixed Nash equilibrium in general games is feasible in polynomial time, even when the
number of players is restricted to two~\cite{DGP06,CD06}.

It is therefore natural to look for relaxed versions,
and in particular approximations,  of Nash equilibria which might be computed in polynomial time.
Approximate Nash equilibria of various games have been defined and studied both in the 
additive~\cite{KM02, LMM03, CD06b, DMP06, DMP07, FNS07,TS08} 
and in the multiplicative models of approximation~\cite{CS07, AL10}.  Here we consider
multiplicative $\eps$-approximate Nash equilibria, for $0 < \eps < 1$, that is 
states where no single player can improve his cost by more than a factor  
of $\eps$ by unilaterally changing his strategy. In this context,  the analogous concept
of the Nash dynamics is the $\eps$-Nash dynamics, where only $\eps$-moves are permitted, which 
improve the respective player's cost at least  by a factor of $\eps$. Rosenthal's potential function arguments
implies again that the $\eps$-Nash dynamics converges to an $\eps$-Nash equilibrium.

In a very interesting positive result, Chien and Sinclair~\cite{CS07} proved that 
in congestion games with three specific constraints
the $\eps$-Nash dynamics indeed does converge fast, in polynomial time.
The three constraints are positivity, symmetry, and $\alpha$-bounded jump.
The first two constraints are rather standard.
A congestion game is positive if all delay functions are non-negative, and it
is symmetric if
all players have the same strategies.
The third constraint puts a limit on the speed of growth of the delay functions. They define a positive congestion game 
to be with $\alpha$-bounded jump, for some $\alpha \geq 1$, if the delay functions can not grow  more than a factor $\alpha$
when their argument is increased by one. Their result states that in positive and symmetric congestion
games with $\alpha$-bounded jump, 
the $\eps$-Nash dynamics converges in polynomial time 
in the input length, $\alpha$ and $1/\eps$.

Could it be that the $\eps$-Nash dynamic converges fast in every congestion game? Skopalik
and V\"ocking have found a very strong evidence for the contrary. In a negative result~\cite{SV08}, they  
proved that for every polynomial time computable  $0 < \eps < 1$, finding an $\eps$-approximate Nash equilibrium
is $\PLS$-complete, that is just as hard as finding a Nash equilibrium.
In fact, they result is even stronger, it shows the $\PLS$-completeness
of the problem for positive games.



In this paper we extend these studies to the case when the delays can be also negative, that is some resources might
have the special status of improving the cost of the players when they are chosen. 
We first prove that in symmetric games with $\alpha$-bounded jump, when all delay functions are negative,  the $\eps$-Nash dynamic
converges in polynomial time, just as in the case of positive games. We then prove a hardness result: finding an  
$\eps$-Nash equilibrium in symmetric games with $\alpha$-bounded jump becomes  $\PLS$-complete
when  delay functions of arbitrary sign are allowed.
In fact, our result is somewhat stronger: the $\PLS$-completeness holds even
when all delay functions are of constant sign or when all the delays are of constant absolute value.


\section{Preliminaries and results}
\subsection{Context}
\label{prelim}
We recall the notions of congestion games, local search problems and
approximate Nash equilibrium.

\subsubsection{Congestion games}

For a natural number $n$, we denote by $[n]$ the set $\{1, \ldots,
n\}$. For an integer $n \geq 2,$ an {\em $n$-player game in normal
form}  is specified by a set of {\em (pure) strategies} $S_i$, and a
{\em cost} function $c_i : S \rightarrow \mathbb{Z},$ for each player $i \in [n]$, where $S = S_1 \times
\cdots \times S_n$ is the set of {\em states}. 
For $s \in S$, the value $c_i(s)$ is the cost of player $i$ for 
state $s$. 
A game is {\em symmetric} if $S_1 = \ldots = S_n$. 

For a state $s = (s_1, \ldots, , s_n) \in S$, and for a pure strategy $t_i \in S_i$, we let 
$(s_{-i}, t)$ to be the state 
\mbox{$(s_1, \ldots , s_{i-1}, t, s_{i+1}, \ldots, s_n) \in S$}.
A {\em pure Nash equilibrium} is a state $s$ such that for all $i$, and for all pure strategies 
$t \in S_i$, we have 
$$c_i(s) \leq c_i(s_{-i}, t).$$
In general  games do not necessarily have a pure 
Nash-equilibrium.

A specific class of games which always have a pure Nash equilibrium are {\em congestion games},
where the cost functions are determined by the shared use of resources. More precisely, an
$n$-player congestion game is a 4-tuple
$G = (n, E, (d_e)_{e \in E},(S_i)_{i \in [n]})$, where $E$ is a finite set  of
{\em edges} (the common resources), 
$d_e : [n] \rightarrow \mathbb{Z}$ is a non-decreasing 
{\em delay function}, for every $e \in E$, and
$S_i \subseteq 2^E$ is the set of pure strategies of player $i$, for $i \in [n]$.
Given a state
$s = (s_1, \ldots , s_{i},  \ldots, s_n)$, let the {\em congestion} of $e$ in $s$ be
$f_e(s) = |\{ i \in [n] : e \in s_i \} |$. The cost function of user $i$ is defined then as 
$c_i(s) = \sum_{e \in s_i} d_e(f_e(s))$. Intuitively, each player uses some set of
resources, and the cost of each resource $e$ depends on the number of players using it, as described by the
delay function. To simplify the notation, we will specify a symmetric congestion game by 
a 4-tuple $G = (n, E, (d_e)_{e \in E},Z)$, where by definition the set of pure strategies of every player is 
$Z\subseteq 2^E$. 
We will refer to $Z$ as the set of {\em available} strategies.

That congestion games have indeed a Nash equilibrium can be easily shown by a potential function argument,
due to Rosenthal~\cite{R73}, as follows.
Let us define the potential function $\phi$ on the set of states as
$$\phi(s) = \sum_{e \in E} \sum_{t=1}^{f_e(s)} d_e(t).$$
If $s = (s_1, \ldots , s_{i},  \ldots, s_n)$ and  $s'=(s_{-i},s_i')$
are two states 
differing only for player $i$ then
$\phi(s) - \phi(s') = c_i(s) - c_i(s')$ since both of these quantities are in fact equal to
$\sum_{e \in s_i \setminus s'_i} d_e (f_e(s)) - \sum_{e \in s'_i \setminus s_i} d_e (f_e(s')).$
Therefore, in any state which is not a pure Nash equilibrium, there is always a player that can change unilaterally his strategy so that the induced new state has smaller potential. In fact the decrease in the cost function and in the potential are identical. This means that a finite
sequence of such individual changes, the so-called {\em Nash dynamics}, necessarily results in a pure Nash equilibrium 
since the integer valued potential function can not decrease forever. Therefore  congestion games can be
casted as local search problems, and the finding of a Nash equilibrium can be interpreted as the search of a local optimum.

\subsubsection{Local search problems}

A local search problem  is defined by a 4-tuple $\Pi=(\mathcal{I} , F ,(v_I)_{I\in \mathcal{I}}, (N_I)_{I\in \mathcal{I}})$, 
where $\mathcal{I}$ the set of instances,
$F$ maps every instance $I\in\mathcal{I}$ to a finite set of feasible solutions $F(I)$,
the objective function $v_I: F(I) \rightarrow \mathbb{Z}$ gives the value $v_I(S)$ of a feasible solution,
and $N_I(S) \subseteq F(I)$  is the neighborhood of $S \in F(I)$. Given an instance $I$,
the goal is to find a feasible solution $S \in F(I)$ such that is also local minimum, that is
for all $S'\in N_I(S)$, it satisfies $v_I(S) \leq v_I(S')$. 
A local search problem is in the class $\PLS$~\cite{JPY88, PY88} if there exist polynomial algorithms 
in the instance length to compute: an initial solution $S_0$;
the membership in $F(I)$;
the objective value $v_I(S)$; 
and a feasible solution $S' \in N_I(S)$ such that $v_I(S') < v_I(S)$ whenever $S$ is not a local minimum.
Computing a Nash equilibrium of congestion games is then indeed in PLS: Given an instance $G$,
the feasible solutions $F(G)$ are the states $S$, the value $v_G(s)$ of a state $s$ is its potential $\phi(s)$, and the 
neighborhood $N_G(s)$ consists of those states which differ in one coordinate from $s$.

The notion of $\PLS$-reducibility was introduced in \cite{JPY88}. A problem
$\Pi=(\mathcal{I} , F ,(v_I)_{I\in \mathcal{I}}, (N_I)_{I\in \mathcal{I}})$ is $\PLS$-reducible to 
$\Pi'=(\mathcal{I}' , F' ,(v'_{I})_{I\in \mathcal{I'}}, (N_I')_{I\in \mathcal{I'}})$ if there exist polynomial time computable functions 
$f:  \mathcal{I} \rightarrow  \mathcal{I}'$ and $g_I : F(f( I)) \rightarrow F(I)$, for ${I\in \mathcal{I}}$,
such that  if $S'$ is a local optimum of $f(I)$ then $g_I(S')$ is local optimum of $I$.  
Complete problems in $\PLS$ are not believed to be solvable by efficient procedures.
%
%
Therefore, it is highly unlikely that there exists at all a polynomial time algorithm for 
finding a pure equilibrium in congestion games. Indeed,
Fabrikant, Papadimitriou and Talwar~\cite{FPT04} have shown that this problem is
$\PLS$-complete, even for symmetric games.

\subsubsection{Approximate Nash equilibirum}

Several relaxations of the notion of equilibrium have been
considered in the form of 
approximations. Let $0 < \eps < 1$.
In our context $\eps$ will be a constant or some polynomial time computable function in the input length.
A {\em multiplicative $\eps$-approximate Nash equilibrium} is a 
state $s$ such that for all $i \in [n]$, and for all
strategies $t \in S_i$, we have 
$$c_i(s) - c_i(s_{-i}, t) \leq \eps |c_i(s)|.$$
Given a state $s$ and a strategy $t \in S_i$, we say that $(s_{-i}, t)$ is an {\em $\eps$-move} for player $i$ if
$$c_i(s) - c_i(s_{-i}, t) > \eps |c_i(s)|.$$
Clearly $s$ is an $\eps$-approximate Nash equilibrium if no player has an $\eps$-move.

The {\em $\eps$-Nash dynamics} is defined as a sequence of $\eps$-moves, where a player with 
the 
\textbf{largest absolute gain}
 makes the change in his strategy, when several players with $\eps$-move are available. Analogously to the exact
case, the $\eps$-Nash dynamics converges to an $\eps$-approximate Nash equilibrium. Finding an $\eps$-approximate Nash equilibrium
is also a problem in $\PLS$. When casting this as a local search, the only difference with the exact equilibrium case
is that the neighborhoods are restricted to states which are reachable by an $\eps$-move.

\subsection{
Related results}
In ~\cite{CS07} Chien and Sinclair have considered the rate of convergence of the $\eps$-Nash dynamics in 
symmetric congestion games
with two additional restrictions on the delay functions. 
We say that a delay function $d_e$ is positive if 
the delays
$d_e(t)$ are non-negative integers for all $1 \leq t \leq n$. 
A congestion game is {\em positive} if all delay functions are positive.
Let $\alpha \geq 1$.
A positive delay function is 
with $\alpha$-bounded jump if
the delays satisfy $d_e(t+1) \leq \alpha d_e(t)$, for all
$t \geq 1$. 
 We can think of $\alpha$ as being a constant,
or a polynomial time computable function in the input length of the game.
Obviously, a positive delay function with $\alpha$-bounded jump 
can never take the value 0.
A positive game is {\em with $\alpha$-bounded jump} if
all delay functions are with $\alpha$-bounded jump.
Chien  and Sinclair have shown that in symmetric positive games with bounded jump the
$\eps$-Nash dynamics converges in polynomial time. 

\begin{theorem}[Chien and Sinclair~\cite{CS07}]
\label{theorem:sinclair}

For every $\alpha \geq 1$ and $0 < \eps < 1$, in  $n$-player symmetric, positive congestion games 
with $\alpha$-bounded jump the $\eps$-Nash
dynamics converges  from any initial state in $O(n\alpha {\epsilon^{-1}}\log(nmD))$ steps, 
where $m = |E|$, and $D = \max \{  d_e(n) : e \in E \}$ is an upper bound on the delay functions.
\end{theorem}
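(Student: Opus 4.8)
The plan is to run a potential argument showing that each step of the $\eps$-Nash dynamics drives down Rosenthal's potential $\phi$ by a constant multiplicative factor, so that only logarithmically many steps (in the range of $\phi$) are needed. Since the game is an exact potential game, a move by player $i$ from $s$ to $s'=(s_{-i},t)$ changes the potential by exactly that player's gain, $\phi(s)-\phi(s')=c_i(s)-c_i(s_{-i},t)$. As $\phi$ takes positive integer values bounded above by $\sum_e\sum_{t=1}^{n} d_e(n)\le nmD$, a geometric decrease of the form $\phi(s')\le(1-\beta)\phi(s)$ yields convergence in $O(\beta^{-1}\log(nmD))$ steps; the whole problem therefore reduces to proving that the move selected by the dynamics satisfies such a bound with $\beta=\Theta(\eps/(n\alpha))$.

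First I would record two structural inequalities that use only positivity, monotonicity, and symmetry. Writing $c^s(t)=\sum_{e\in t}d_e(f_e(s))$ for the cost of strategy $t$ evaluated at the congestions of $s$, and $L=\min_{t\in Z}c^s(t)$, monotonicity and non-negativity give $\phi(s)\le\sum_{i}c_i(s)$, so the most expensive player satisfies $c_{\max}:=\max_i c_i(s)\ge\phi(s)/n$. Symmetry gives $L\le c^s(s_j)=c_j(s)$ for every $j$, hence $L\le c_i(s)$ for all $i$. The $\alpha$-bounded jump hypothesis enters next: if player $i$ deviates to $t$, every edge of $t$ either keeps its congestion or gains one unit, so $d_e(f_e(s_{-i},t))\le\alpha d_e(f_e(s))$ and therefore $c_i(s_{-i},t)\le\alpha\, c^s(t)$. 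Consequently the best gain available to player $i$ is at least $c_i(s)-\alpha L$.

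The core of the argument is a case analysis certifying that the move actually performed, namely the $\eps$-move of largest absolute gain, lowers $\phi$ by at least $\Omega(\eps\,\phi(s)/(n\alpha))$. I split on whether $c_{\max}>\frac{\alpha L}{1-\eps}$ or not. In the first case the most expensive player's deviation to the minimiser of $c^s$ already is an $\eps$-move, since its gain $c_{\max}-\alpha L$ exceeds $\eps\, c_{\max}$, and this gain is at least $\eps\, c_{\max}\ge\eps\,\phi(s)/n$. In the second case all costs are squeezed within a factor $\frac{\alpha}{1-\eps}$ of one another, so $L\ge(1-\eps)c_{\max}/\alpha\ge(1-\eps)\phi(s)/(n\alpha)$; because $s$ is not an $\eps$-equilibrium some player $i$ has an $\eps$-move, whose gain exceeds $\eps\, c_i(s)\ge\eps L\ge\eps(1-\eps)\phi(s)/(n\alpha)$. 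Either way the largest-gain $\eps$-move, which is the one the dynamics selects, has gain $\Omega(\eps\,\phi(s)/(n\alpha))$, supplying the required $\beta$.

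The main obstacle is the second case. The naive idea of always migrating the most expensive player to the globally cheapest strategy fails exactly when all players already sit at roughly optimal cost, for then the factor-$\alpha$ loss from bounded jump can wipe out the entire gain; this is why one must instead invoke the hypothesis that $s$ is not yet an $\eps$-equilibrium and convert the guaranteed existence of some $\eps$-move into a lower bound on its gain through the uniform inequality $L\le c_i(s)$. A secondary point to verify carefully is that the gain witnessed in each case is realised by a move that genuinely qualifies as an $\eps$-move, so that the dynamics is permitted to take it; this is why the threshold is placed at $\frac{\alpha L}{1-\eps}$ rather than at $\alpha L$. Tracking the resulting $(1-\eps)$ factors together with the positivity-dependent lower bound $\phi\ge1$ then yields the stated $O(n\alpha\,\eps^{-1}\log(nmD))$ step count.
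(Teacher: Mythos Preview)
The paper does not supply a proof of Theorem~\ref{theorem:sinclair}; it is quoted from~\cite{CS07} as background, so there is no in-paper argument to compare against. Your sketch is the standard Chien--Sinclair potential argument and is correct in outline: bound $\phi(s)\le\sum_i c_i(s)\le n\,c_{\max}$, use symmetry plus $\alpha$-bounded jump to show the largest-gain $\eps$-move decreases $\phi$ by a $\Theta(\eps/(n\alpha))$ fraction, and conclude by geometric decay.

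One technical point does need repair. You set $c^s(t)=\sum_{e\in t}d_e(f_e(s))$ and $L=\min_{t\in Z}c^s(t)$, but delays are only defined on $[n]$, so $c^s(t)$ is undefined whenever $t$ contains an edge with $f_e(s)=0$; for such edges your key inequality $d_e(f_e(s_{-i},t))\le\alpha\,d_e(f_e(s))$ is meaningless. The clean fix---which is also exactly what the paper does in its own proof of the analogous Theorem~\ref{theorem:main} for negative games---is to take $L=\min_j c_j(s)$ and, in Case~1, have the most expensive player copy the strategy $s_j$ of the cheapest player. Every edge of $s_j$ already has congestion at least~$1$ in $s$, so the $\alpha$-bounded-jump inequality applies directly, $c_i(s_{-i},s_j)\le\alpha\,c_j(s)=\alpha L$, and the remainder of your two-case analysis goes through unchanged.
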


The hope that the $\eps$-Nash dynamics  
converges  fast in generic congestion games was crushed by  Skopalik and V\"ocking~\cite{SV08}. In a strongly negative result they  
proved that 
finding an $\eps$-approximate Nash equilibrium
is $\PLS$-complete, that 
is just as hard as finding a Nash equilibrium. In fact, they result is even stronger, it shows the $\PLS$-completeness
of the problem also in positive games. 

\begin{theorem}[Skopalik and V\"ocking~\cite{SV08}]
\label{theorem:skopalik}
For every polynomial time computable  $0 < \eps <1$, 
finding an $\eps$-approximate Nash equilibrium in a positive congestion game is $\PLS$-complete.
\end{theorem}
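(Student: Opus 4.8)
The plan is to establish $\PLS$-completeness in the two usual steps, membership and hardness. Membership is already implicit in Section~\ref{prelim}: casting the game as a local search problem whose feasible solutions are the states, whose objective value is Rosenthal's potential $\phi$, and whose neighborhood of a state consists of exactly the states reachable by a single $\eps$-move, yields a $\PLS$ instance, since $\phi$ and an improving $\eps$-neighbor (when one exists) are both computable in polynomial time.

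For hardness I would first make explicit why the obvious route fails. Reducing from the exact Nash equilibrium problem of~\cite{FPT04} by re-using its construction verbatim does not suffice, because an $\eps$-approximate equilibrium is strictly weaker than an exact one: a state can satisfy the $\eps$-condition for every player and yet be far from any exact equilibrium. A correct reduction must therefore build a game whose $\eps$-approximate equilibria are pinned to configurations that genuinely encode a local optimum of the source problem; in other words, it must rule out \emph{spurious} approximate equilibria. I would accordingly reduce from a $\PLS$-complete problem with an explicit circuit/flip structure and simulate its computation inside congestion-game gadgets.

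The technical core is a separation-of-scales construction. Each logical ingredient of the simulated computation is represented by resources whose delays differ by large multiplicative factors, grouped into levels, with the factors chosen so that the $\eps$-slack tolerated at a given level is dominated by the cost contributions of the higher levels. Then, whenever a player's strategy fails to reflect the intended transition, the available improvement exceeds the factor $\eps$ and forces an $\eps$-move; once the configuration faithfully encodes a valid computation step, no player retains an $\eps$-move. The $\eps$-approximate equilibria are thus exactly the configurations corresponding to local optima of the source instance, and the recovery map $g_I$ simply decodes the solution from such a configuration.

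The main obstacle is controlling these spurious equilibria globally. The large multiplicative gaps needed to defeat $\eps$-cheating at each level must be simultaneously compatible with positivity and with a polynomial bound on the description length, so the delays must grow by large factors while remaining polynomially representable. The delicate part of the argument is showing that no combination of partially correct local configurations can accidentally satisfy the $\eps$-condition everywhere at once — that all the gadgets \emph{lock} together — which is precisely what makes the $\PLS$-completeness hold uniformly for every polynomial-time computable $\eps$.
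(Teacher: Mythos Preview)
The paper does not prove Theorem~\ref{theorem:skopalik}; it is stated without proof in the ``Related results'' subsection as a cited result of Skopalik and V\"ocking~\cite{SV08}, and the paper uses it only as a black box for the reductions in Section~\ref{hard}. There is therefore no proof in the paper against which to compare your proposal.

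Regarding your sketch on its own merits: the membership argument is fine and indeed matches what the paper says in Section~\ref{prelim}. Your hardness outline is also in the right spirit---the actual proof in~\cite{SV08} does reduce from a $\PLS$-complete circuit/flip problem and does rely on large multiplicative gaps between delay levels so that any deviation from the intended encoding creates an improvement exceeding the $\eps$-threshold. However, what you have written is a strategy, not a proof: you name no concrete source problem, give no gadget construction, specify no delay values, and prove no invariant that actually rules out spurious $\eps$-equilibria. The sentence ``the delicate part of the argument is showing that no combination of partially correct local configurations can accidentally satisfy the $\eps$-condition everywhere at once'' identifies the difficulty but does nothing to resolve it. Since that locking argument is essentially the entire technical content of~\cite{SV08}, your proposal as it stands is a plausible plan rather than a proof.
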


\subsection{Our contributions}

In this paper we study the complexity of finding an $\eps$-approximate Nash equilibrium in congestion games 
where the delay functions can also have negative values. Our reason to study negative delays is twofold.
Firstly, negative delays are motivated by real scenarios worth of investigations. They correspond to situations where
some edges are encouraged to be taken by an authority, for example for regulation purposes. Such negative edges become incentives, 
whereas the positive edges carry the traditional penalties. Secondly, games with delays of arbitrary sign
seem to have a richer mathematical structure than games with only positive delays. Indeed, we will show that 
in symmetric games with $\alpha$-bounded jump, while finding 
an $\eps$-approximate Nash equilibrium is easy when delays can only be negative, just as in positive games,
the problem becomes significantly harder when they can be of arbitrary sign.

We feel that the case of the negative delays is somewhat analogous to the case of edges with negative weights in shortest paths
problems. Negative weights behave in that context also  as incentives, and the structure of shortest path problems is more complex
when negative weights are also allowed. For example, Dijkstra's algorithm for single source shortest paths works correctly
only in the case of  edges with positive weight.

In Section~\ref{sec-walk-search} we deal with symmetric games with $\alpha$-bounded jump, where all delay functions are negative.
In {\bf Theorem~\ref{theorem:main}} we show that the $\eps$-Nash dynamics converges in polynomial time in the input length, $\alpha$ and $1/\eps$. This result is analogous to the result of Chien and Sinclair~\cite{CS07}, stated in Theorem~\ref{theorem:sinclair},
but its proof is significantly harder. In positive games the proof essentially shows that while the equilibrium state is not reached,
one can always find a player whose cost is polynomially related to the potential, and who either can make an $\eps$-move himself,
or whose cost can lower bound the gain of any player 
with an $\eps$-move. In negative games it is not always possible to find a player with such a high cost, 
because we can relate the cost functions  to the
appropriate potential only when it is  restricted to edges with non-trivial congestion. 
To deal with the unaccounted edges, we show that 
we can find an edge whose initial delay is polynomially related to the remaining part of the potential.
We then make a somewhat subtle case analysis which considers also strategies
involving these edges.

In Section~\ref{hard} we extend our investigation to symmetric games with $\alpha$-bounded jump, and
with arbitrary delay functions. We show that in that case the problem of finding an $\eps$-approximate Nash equilibrium becomes
$\PLS$-complete. In fact, we can show this even with some specific restrictions on the delays. In {\bf Theorem~\ref{theorem:non-alternating}} we prove
the hardness result when all delay functions are of constant sign, and in {\bf Theorem~\ref{theorem:flip}} when delays can
change their sign, but remain of constant absolute value. For the $\PLS$-reductions we use the $\PLS$-complete problem of
Skoplik and V\"ocking~\cite{SV08} stated in Theorem~\ref{theorem:skopalik}.

\section{Negative games}
\label{sec-walk-search}
We start now the study of finding $\eps$-approximate Nash equilibria in congestion
games where the delay functions can take negative values. In this section we impose the restriction that the delay
functions have only negative values. We further suppose that the games are symmetric and $\alpha$-bounded.
We show in a result analogous to Theorem~\ref{theorem:sinclair} that for any polynomial time computable $\alpha$ and $\eps$, 
the $\eps$-Nash dynamics converges in polynomial time.

We say that a delay function $d_e$ is negative if 
the delays
$d_e(t)$ are negative integers for all $1 \leq t \leq n$. 
A congestion game is {\em negative} if all delay functions are negative.
Let $\alpha \geq 1$. 
A negative delay function is 
with $\alpha$-bounded jump if
the delays satisfy $d_e(t+1) \leq  d_e(t)/\alpha$, for all
$t \geq 1$.
A negative game is {\em with $\alpha$-bounded jump} if
all delay functions are with $\alpha$-bounded jump.

\begin{theorem}
\label{theorem:main}
For every $\alpha \geq 1$,
in an $n$-player symmetric, negative congestion game with $\alpha$-bounded jump the Nash
dynamics converges  from any initial state in $O( (\alpha n^2 +nm) \epsilon^{-1} \log (nmD))$ steps
where $m = |E|$, and  $D = \max \{ - d_e(1) : e \in E \}$ is an upper bound on magnitude of the delay functions.
\end{theorem}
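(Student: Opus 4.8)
The plan is to track, along the $\eps$-Nash dynamics, the \emph{residual potential} $R(s)=\phi(s)-\phi_{\min}$, where $\phi_{\min}=\min_{w}\phi(w)$ is the potential of a globally optimal state. Since every $\eps$-move strictly decreases $\phi$, and $\phi$ is integer valued with $\phi\in[\phi_{\min},0]$ and $|\phi_{\min}|\le nmD$, the quantity $R(s)$ is a non-negative integer bounded by $nmD$. The drop in $\phi$ at a step equals the gain $\Delta$ of the moving player, so a move reduces $R$ by exactly $\Delta$. Hence it suffices to establish a multiplicative decrease $R(s')\le(1-\delta)R(s)$ with $\delta=\Theta\!\left(\frac{\eps}{\alpha n^2+nm}\right)$: as long as the dynamics has not reached an equilibrium there is an $\eps$-move of integer gain $\ge 1$, so $R(s)\ge 1$, and combined with $R(s_0)\le nmD$ this forces termination within $O\!\left(\delta^{-1}\log(nmD)\right)=O\!\left((\alpha n^2+nm)\eps^{-1}\log(nmD)\right)$ steps. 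The key point, and the reason this differs from the positive case, is that one cannot track $\phi$ or $|\phi|$ directly: in a negative game the potential can be exponentially larger in magnitude than any achievable gain, because the deep layers $d_e(1),\dots,d_e(f_e-1)$ of a congested edge dwarf its marginal delay $d_e(f_e)$. It is precisely the residual $R(s)$, not $|\phi(s)|$, that the available gains control.

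Everything therefore reduces to the key lemma: at any state $s$ that is not an $\eps$-equilibrium, the largest-absolute-gain $\eps$-move has gain $\Delta\ge\delta\,(\phi(s)-\phi_{\min})$. To prove it I would compare $s$ to a global minimizer $w^{\ast}$. Writing $\phi(s)-\phi(w^{\ast})$ edge by edge and using optimality of $w^{\ast}$, the gap is carried by edges that $w^{\ast}$ populates more heavily than $s$; moving a player onto such an underused (beneficial) edge $e$ yields a gain of order $|d_e(f_e(s)+1)|$. Exploiting symmetry — all players share $Z$, so any player may adopt any optimal strategy $w^{\ast}_k$ — I would sum the one-shot deviation gains of the $n$ matched deviations $s_i\mapsto w^{\ast}_{\pi(i)}$, each \emph{evaluated at the single state $s$}, so that no intermediate states arise and no bounded-jump errors compound. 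Paying one factor of $\alpha$ for the unit congestion shift caused by a single deviating player, this sum is a fixed fraction of $\phi(s)-\phi(w^{\ast})$, and averaging over the $n$ players exhibits a deviation whose absolute gain is $\Omega\!\left(\frac{1}{n\alpha}(\phi(s)-\phi_{\min})\right)$.

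The delicate step is to turn this large-absolute-gain deviation into an \emph{eligible} $\eps$-move, and here the case analysis announced for negative games enters. If the exhibited deviation is already an $\eps$-move we are done, since the dynamics selects the largest absolute gain. Otherwise the deviating player $i$ satisfies $\eps|c_i(s)|\ge\Delta_i$, so $|c_i(s)|$ is large relative to the gap; as in Chien and Sinclair's positive argument I would relate $c_i(s)$ to the cost of an actual $\eps$-mover $j$ through a single unit congestion shift (one more factor $\alpha$) and bound $\Delta\ge\eps|c_j(s)|\ge\Omega\!\left(\frac{\eps}{\alpha n^2}\right)(\phi(s)-\phi_{\min})$. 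This disposes of the part of the gap carried by edges whose marginal cost is visible in the current costs. The genuinely new difficulty is the part carried by deep layers, where no player's current cost can lower-bound the gain. For this I would isolate the largest initial delay $D^{\ast}=\max_{e}|d_e(1)|$; since $|\phi|\le nmD^{\ast}$, there is an edge $e^{\ast}$ with $|d_{e^{\ast}}(1)|$ at least $\frac{1}{nm}$ times the unaccounted part of the gap, and one then examines moves that place a player on an optimal strategy through $e^{\ast}$, whose benefit is governed by $|d_{e^{\ast}}(1)|$. This branch costs an additive $nm$ rather than a multiplicative factor, which is exactly the origin of the $+\,nm$ term.

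I expect the main obstacle to be this $\eps$-move conversion in the deep-layer branch: one must argue that a sufficiently beneficial move through $e^{\ast}$ exists and is an $\eps$-move (or else forces a high-cost player amenable to the single-shift trick), all the while keeping every congestion comparison to a shift of at most one, so that the bounded-jump loss remains a single factor of $\alpha$ rather than the fatal $\alpha^{n}$ produced by a naive player-by-player morphing of $s$ into $w^{\ast}$. Combining the two branches gives $\Delta\ge\Theta\!\left(\frac{\eps}{\alpha n^2+nm}\right)(\phi(s)-\phi_{\min})$, which is the lemma, and with the multiplicative bookkeeping above, the theorem.
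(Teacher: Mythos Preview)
Your overall framework---track a nonnegative shifted potential and show a multiplicative decrease per $\eps$-move---is the right one, and matches the paper's scaffolding. But your route to the key lemma is genuinely different from the paper's, and the central step is not established.

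The gap is the claim that the sum of one-shot deviations $\sum_i\bigl[c_i(s)-c_i(s_{-i},w^\ast_{\pi(i)})\bigr]$ is, up to a single factor of $\alpha$, at least $\phi(s)-\phi(w^\ast)$. Because these $n$ deviations are all evaluated at $s$ and interact through shared edges, their sum is \emph{not} the telescoping potential difference, and the relation you assert does not follow from bounded jump alone. (Already with two players and two symmetric edges one can make individual $\Delta_i$ large and negative by choosing a permuted minimizer; the aggregate inequality you need then hinges on cancellations you have not controlled.) Your ``deep-layer branch'' inherits the same problem: you pick $e^\ast$ with large $|d_{e^\ast}(1)|$ using $|\phi|\le nmD^\ast$, but nothing guarantees $e^\ast$ is underused in $s$, so a move through $e^\ast$ need not yield $|d_{e^\ast}(1)|$; and ``unaccounted part of the gap'' is never defined in terms of your $R(s)$.

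The paper avoids all of this by choosing a different shift. It works with $\psi(s)=-\sum_{e}\sum_{t=f_e(s)+1}^{n}d_e(t)$, which is $\phi(s)$ minus the potential of the (possibly infeasible) fully saturated profile, so no global minimizer $w^\ast$ is needed. It then splits $\psi(s)$ into the part $\psi'(s)$ carried by edges with $f_e(s)>0$ and the part carried by edges with $f_e(s)=0$. In the first case an averaging lemma gives a player $j$ with $-c_j(s)\ge\psi'(s)/n^2$, and the $\eps$-mover $i$ either already has $|c_i(s)|$ of that order (then the $\eps$-move itself suffices) or can imitate $s_j$, paying one factor of $\alpha$. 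In the second case some \emph{unused} edge $e$ has $|d_e(1)|\ge\psi(s)/2nm$, and $i$ either already has $|c_i(s)|\ge|d_e(1)|/2$, or can move to any strategy containing $e$ and collect $|d_e(1)|$ in full because $f_e(s)=0$. The decisive point you are missing is exactly this: the edge that rescues the ``deep'' case must be one with congestion zero in $s$, and it is the shape of $\psi$---not a comparison to $w^\ast$---that isolates such edges.

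So: either prove your summed-deviation inequality (which will force a congested/uncongested split anyway), or switch potentials to $\psi$ and follow the paper's two-case argument with the explicit deviations $s_i'\in\{s_j,\ \text{any strategy through an unused }e\}$.
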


\begin{proof}
We will suppose without loss of generality that every edge appears in some strategy,
since otherwise the edge can be discarded from $E$. 
We first define a positive potential function which will be appropriate to measure the
progress of the $\eps$-Nash dynamics. Let $\psi$ be defined over the states as 
$\psi(s) = - \sum_{e \in E} \sum_{t= f_e(s) + 1}^{n} d_e(t)$. 
The function $\psi$ is clearly positive, and we claim that it is a potential function, 
that is $\psi(s) - \psi(s') = c_i(s) - c_i(s')$
if the states $s$ and $s'$ differ only in their $i$th coordinate. This follows immediately from the fact
that for every state $s$, we have 
$\psi(s) = \phi(s) -k$, where $\phi(s) = \sum_{e \in E} \sum_{t=1}^{f_e(s)} d_e(t)$
is the Rosenthal potential function, and $k$ is the constant $- \sum_{e \in E} \sum_{t=1}^{n} d_e(t).$
Observe that $\psi(s)$ is bounded from above by $nmD$, for every state $s$.

For an arbitrary initial state $s^{(0)}$, let $s^{(k)}$ be the $k$th state of the $\epsilon$-Nash dynamics process.
We claim that $\psi(s^{(k+1)})  \leq 
\psi(s^{(k)}) (1 - \epsilon / 4(\alpha n^2 +nm))$, for every $k$, which clearly implies the theorem.
Suppose that $s^{(k)} = s = (s_1, \ldots, s_n)$ is not an $\epsilon$-equilibrium, and 
let $i$ be the player which can make the largest gain $\epsilon$-move.
To prove our claim, we will show that there exists a strategy $s_i'$ for player $i$ such that
$c_i(s) - c_i(s_{-i}, s_i') \geq \psi(s)  \epsilon / 4(\alpha n^2 +nm)$.
Hence, an $\eps$-move of the $\eps$-Nash dynamics can only be better than
strategy $s_i'$ for player $i$.

The first idea is to try to prove, analogously to the case of positive games, that for some player $j$, the opposite of its cost $-c_j(s)$ is a polynomial
fraction of $\psi(s)$. Unfortunately this is not necessarily true. The sum $\sum_{j=1}^n c_j(s)$ is not necessarily a polynomial fraction of
$\psi(s)$ because edges whose congestion is 0 in $s$ do not contribute to the former, but do contribute the latter.
Therefore we introduce the function $\psi'$ as $\psi$
restricted to the edges with nontrivial congestion, that is by definition
$\psi'(s) = - \sum_{f_e(s) \neq 0} \sum_{t= f_e(s) + 1}^{n} d_e(t)$. 
The following Lemma shows that some of the $-c_j(s)$ is at least a polynomial fraction of $\psi'(s)$.

\begin{lemma}
\label{lem:fraction}
There exists a player $j$ such that
$$ -c_j(s) \geq  \psi'(s)/n^2.$$
\end{lemma}
\begin{proof}
We claim that 
\begin{equation*}
\label{eqn:big}
- n \sum_{j=1}^n c_j(s) \geq \psi'(s),
\end{equation*}
from which the statement  clearly follows.
To prove the claim 
we proceed by the following series of (in)equalities:
\begin{eqnarray*}
- n \sum_{j=1}^n c_j(s) & = & - n \sum_{f_e(s) \neq 0} f_e(s) ~d_e(f_e(s))\\
& \geq & - n \sum_{f_e(s) \neq 0} d_e(f_e(s))\\
& \geq & - \sum_{f_e(s) \neq 0} \sum_{t= f_e(s) + 1}^{n} d_e(t)\\
& =  &\psi'(s),
\end{eqnarray*}
where the second inequality holds because the delay functions are
non-decreasing. 
\end{proof}
We fix a value $j$ which satisfies Lemma~\ref{lem:fraction} for the rest of the proof.
To upper bound $\psi(s)$, we also have to consider the edges of congestion 0, besides the edges which are
accounted for in $\psi'(s)$. 
We have
$$\psi'(s) - n \sum_{f_e(s) = 0} d_e(1) \geq \psi(s),$$
again because the delays are non-decreasing.
This implies that either $\psi'(s) \geq \psi(s)/2$ or $- n \sum_{f_e(s) = 0} d_e(1) \geq \psi(s)/2$,
and the proof proceeds by distinguishing these two cases.

Case 1: $\psi'(s) \geq \psi(s)/2$. We then reason in two sub-cases by comparing
the value of $c_i(s)$ to  $\psi'(s) / 2 \alpha n^2$.
If $ - c_i(s) \geq \psi'(s) / 2 \alpha n^2$, then let $s_i'$ be the strategy
which makes the biggest gain for player $i$. 
Then we have 
\begin{eqnarray*}
c_i(s) - c_i(s_{-i}, s_i'))
    & \geq & - \epsilon c_i(s) \\
    & \geq &  \epsilon \psi(s) / 4 \alpha n^2,
\end{eqnarray*}
where first inequality holds since the move of player $i$ is an $\epsilon$-move,
and the second inequality is true because of the hypotheses.
If $ - c_i(s) < \psi'(s) / 2 \alpha n^2$, then let $s_i' = s_j,$ the strategy of player $j$ in state $s$.
Observe that $s_j$ is an available strategy for player $i$ since the game is symmetric. Then
\begin{eqnarray*}
c_i(s) - c_i(s_{-i}, s_i'))
    & \geq & c_i(s) - c_j(s)/\alpha \\
    & \geq &  \psi'(s) /  \alpha n^2  -  \psi'(s) / 2 \alpha n^2\\
    & \geq & \psi(s) / 4 \alpha n^2.
\end{eqnarray*}
Here the first inequality is true because the game is with $\alpha$-bounded jump. The second inequality 
follows from  the hypothesis and because $ c_j(s) \geq \psi'(s)/n^2.$ Finally, the third inequality holds
because $\psi'(s) \geq \psi(s)/2$.

Case 2: $- n \sum_{f_e(s) = 0} d_e(1) \geq \psi(s)/2$. Then for some edge with $f_e(s) = 0$, we
have $-d_e(1) \geq \psi(s)/2nm$. Let's fix such an edge $e$. We distinguish two sub-cases now 
by comparing
the value of $c_i(s)$ to  $d_e(1)/2$.  If $c_i(s) \leq d_e(1)/2$ then 
let $s_i'$ be the strategy
which makes the biggest gain for player $i$. 
Then, similarly to the first sub-case of Case 1, using the hypotheses and that 
player $i$'s move is an $\epsilon$-move,
we have 
\begin{eqnarray*}
c_i(s) - c_i(s_{-i}, s_i'))
    & \geq & - \epsilon c_i(s) \\
    & \geq &  \epsilon \psi(s) / 4nm.
\end{eqnarray*}
If $c_i(s) > d_e(1)/2$ then let $s'_i$ be some strategy that contains the edge $e$. There exists
such a strategy since useless edges were discarded from $E$. Then
$f_e(s_{-i}, s_i')) =1$ since $f_e(s) =0$ and $s$ and $(s_{-i}, s_i')$ differ only for the $i$th player.
This, in turn, implies that $c_i(s_{-i}, s_i')) \leq d_e(1)$, since the delays are negative. Therefore
\begin{eqnarray*}
c_i(s) - c_i(s_{-i}, s_i'))
    & \geq & c_i(s) - d_e(1) \\
    & \geq &  - d_e(1)/2\\
    & \geq & \psi(s) / 4 nm,
\end{eqnarray*}
where the last two inequalities follow from the hypotheses.

\end{proof}

\section{Games without sign restriction}
\label{hard}
In this section we deal with congestion games with no restriction on the sign of the delay functions.
Our overall result is that  in that case computing an $\eps$-approximate Nash equilibrium is $\PLS$-hard,
even when the two restrictions of Chien and Sinclair are kept, that is 
when the game is symmetric and with $\alpha$-bounded jump, for $\alpha \geq 1$.  Observe that the smaller
$\alpha$ the stronger is the hardness result, therefore we deal only with constant $\alpha$. Our first step is to observe that a 
simple consequence of Theorem~\ref{theorem:skopalik} is that finding an $\eps$-approximate Nash equilibrium in positive games
remains $\PLS$-complete even if we additionally suppose that the game is symmetric. Our reductions will use the
hardness of this latter problem. The proof of this statement is a $\PLS$-reduction of the search
of an $\eps$-approximate Nash equilibrium in positive games to the same problem in symmetric and positive games.
This reduction is basically identical
to the analogous reduction for pure Nash equilibria, due to Fabrikant, Papdimitriou and Talwar~\cite{FPT04}. We include here the
proof just for the sake of completeness.

\begin{theorem}
\label{theorem:fabrikant}
For every polynomial time computable  $0 < \eps <1$, 
finding an $\eps$-approximate Nash equilibrium in a symmetric and positive congestion game is $\PLS$-complete.
\end{theorem}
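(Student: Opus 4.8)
The plan is to give a $\PLS$-reduction from the problem of finding an $\eps$-approximate Nash equilibrium in an arbitrary (not necessarily symmetric) positive congestion game, which is $\PLS$-complete by Theorem~\ref{theorem:skopalik}, to the same problem restricted to symmetric positive games. Membership in $\PLS$ is immediate, since every symmetric positive game is in particular a positive game and the $\eps$-Nash dynamics was already observed to place this search problem in $\PLS$. So the whole content is the hardness direction, and the strategy is to take an arbitrary positive instance $G = (n, E, (d_e)_{e \in E}, (S_i)_{i \in [n]})$ and build a symmetric positive instance $G'$ whose strategy spaces are shared, while forcing any $\eps$-equilibrium of $G'$ to induce an $\eps$-equilibrium of $G$ via the recovery map $g$.

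First I would construct $G'$ as follows, mimicking the Fabrikant--Papadimitriou--Talwar encoding. Introduce $n$ fresh ``player-tag'' edges $\ell_1, \ldots, \ell_n$, one per original player slot, and let the common available strategy set $Z$ of $G'$ consist of all strategies of the form $\{\ell_i\} \cup t$ where $i \in [n]$ and $t \in S_i$ is an original strategy of player $i$. The tag edges keep the original edge set $E$ intact with the same delay functions, and each $\ell_i$ is given a delay function that is cheap at congestion one but jumps to a prohibitively large value at congestion two (any two players choosing the same tag collide). The intended behaviour is that in any reasonable state exactly one player occupies each tag edge, so the players are forced into a bijection with the original slots, and conditioned on that bijection the costs and hence the $\eps$-move structure of $G'$ coincide with those of $G$. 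The recovery map $g$ then simply reads off, for a state of $G'$ in which player $p$ plays $\{\ell_i\} \cup t$, the original state of $G$ in which slot $i$ plays $t$.

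The key steps, in order, are: (i) specify the tag delays concretely, ensuring the collision penalty on $\ell_i$ is large enough that no $\eps$-equilibrium of $G'$ can have two players sharing a tag, that is, the penalty dominates any conceivable gain from collision so that a colliding player always has an $\eps$-move to an unoccupied tag; (ii) verify that every $\eps$-equilibrium of $G'$ is therefore tag-injective, hence a permutation assigning exactly one player to each slot; (iii) check that under such an assignment the cost function of a player in $G'$ equals the cost of the corresponding slot in $G$ up to the fixed additive tag cost, so that a profitable $\eps$-move in $G$ translates into a profitable $\eps$-move in $G'$ and vice versa; and (iv) conclude that $g_I$ maps $\eps$-equilibria of $G'$ to $\eps$-equilibria of $G$, completing the $\PLS$-reduction, with both $f$ and $g$ manifestly polynomial-time computable.

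The main obstacle is step (i): making the collision argument work in the \emph{multiplicative} $\eps$-approximate setting rather than the exact setting of~\cite{FPT04}. Because the criterion for an $\eps$-move is $c_p(s) - c_p(s_{-p}, t) > \eps\,|c_p(s)|$, the tag penalty has to be chosen so that escaping a collision always clears this multiplicative threshold, and one must confirm that the tag costs do not themselves distort the $\eps$-move relation on the genuine part of the game. Since delays must remain positive integers with the penalty polynomially bounded in the instance size, I would set the tag collision cost to exceed the total of all original delays by a suitable factor depending on $\eps$, so that for a colliding player the relative improvement from moving to a free tag is forced above $\eps$ regardless of the rest of the profile, while a single occupant of a tag pays only a negligible fixed cost that cancels when comparing the two sides of the $\eps$-move inequality.
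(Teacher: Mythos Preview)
Your proposal is correct and follows essentially the same approach as the paper: both add $n$ tag edges $e_1,\ldots,e_n$, take the union of the tagged strategy sets $S'_i=\{s\cup\{e_i\}:s\in S_i\}$ as the common strategy set, and set a collision penalty large enough that any $\eps$-equilibrium of $G'$ must be tag-injective. The paper makes your step~(i) concrete by choosing $d_{e_i}(1)=0$ and $d_{e_i}(t)=D/(1-\eps)$ for $t\geq 2$ with $D=\sum_{e\in E} d_e(n)$, so the tag contributes nothing at congestion one and the cost identity in your step~(iii) holds exactly rather than up to an additive constant (which, as you rightly flag, would otherwise not cancel in the multiplicative $\eps$-move inequality).
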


\begin{proof}
Given a congestion game $G$ with edge set $E$ and strategy sets $S_1, \ldots, S_n$, 
we map it to the symmetric game $G'$ defined as follows. 
The edge set of $G'$ is $E \cup \{e_1, \ldots , e_n\}$ where the $e_i$'s are new edges.
The set of available common strategies is $\bigcup_{i=1}^n S'_i$ where $S'_i= \{s \cup \{e_i\} :s \in S_i\}$. 
Set $D = \sum_{e\in E} d_e(n)$.
The delays of the edges in $E$ don't change, and for every $i \in [n]$, the delay   of $e_i$
is defined as
$$ 
d_{e_{i}}(t) = \left\{
    \begin{array}{ll}
       0 &\mbox{if } t=1, \\
       {D}/(1-\eps) & \mbox{if } t\geq 2.
    \end{array}
\right.	
$$

Observe that $D$ is an upper bound on the cost of the players in any state of $G$.  Let $s' = (s'_1,...,s'_n)$ be an 
$\eps$-approximate Nash equilibrium in $G'$. For every $i$, there is necessarily a unique $j_i$ such that
$s'_{j_i} \in S_i$. Indeed, if the strategies of several players would belong to the same $S_i$ then any of these players could pick
a strategy in which the congestion of the new edge would be 1, and therefore its cost would drop by at least a factor of $\eps$.
It is then  immediate that the state $s = (s_1, \ldots, s_n)$, where by definition $s_i = s'_{j_i} \setminus \{e_i\}$, is an
$\eps$-approximate Nash equilibrium in $G$.
\end{proof}

We need to discuss now the right notion of $\alpha$-bounded jump when the jump occurs
from a negative to a positive value in the delay function. One possibility could be to require
$d_e(t+1) \leq -\alpha d_e(t)$  when  $d_e(t)<0$  and  $d_e(t+1) )\geq 0$, but there are also
other plausible definitions. In fact, we will avoid to give a general definition because it turns out
that this is not necessary for our hardness results. Indeed, we will be able to establish a hardness result
for congestion games where there is no jump at all around 0, that is for 
delay functions of constant sign (still some of the delay functions can be negative while some others positive).
We say that a congestion game is  {\em non-alternating}, if 
every delay function is positive or negative. 
Let $\alpha > 1$ be a constant. A non-alternating congestion game is 
{\em with $\alpha$-bounded jump} if
all delay functions are with $\alpha$-bounded jump. 

What happens when $\alpha = 1$? 
We could also consider non-alternating games with 1-bounded jump, but they are not interesting.
These are games with constant delay functions for which a pure Nash equilibrium can be determined trivially.
Indeed, the cost functions of the individual players are independent from the strategies of the other players,
and therefore any choice of a least expensive strategy,  for each player, forms a Nash equilibrium.
Nonetheless, if we authorize a jump around 0, then even if the jump changes only the sign without changing the
absolute value (which corresponds intuitively to the case $\alpha = 1$ in that situation), the game becomes already hard.
We say that a delay function $d_e$ is a {\em flip function}, if there exists a positive integer $c $ such that
for some $1 \leq k \leq n$, the function satisfies

$$ d_{e}(t) = \left\{
    \begin{array}{ll}
        -c & \mbox{~if ~~} t<k, \\
       c & \mbox{~if ~~} t\geq k.
    \end{array}
\right.	$$
Flip functions are either constant positive functions, or they are 
simple step functions, which are
constant negative up to some point, where an alternation occurs which keeps the absolute value.
After the alternation the function remains constant positive. 
A congestion game is a {\em flip} game if all delay functions are flip functions.
The next two theorems state our hardness results respectively for non-alternating games with
$\alpha$-bounded jump and for flip games. 
\begin{theorem}
\label{theorem:non-alternating}
For every constant $\alpha >1$, and for every polynomial time computable $0 < \eps < 1$,
computing an $\epsilon$-approximate Nash equilibrium in $n$-player
symmetric, non-alternating congestion games 
with $\alpha$-bounded jump is $\PLS$-hard.
\end{theorem}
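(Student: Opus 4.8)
The plan is to give a $\PLS$-reduction from the problem of finding an $\eps$-approximate Nash equilibrium in symmetric positive congestion games, which is $\PLS$-complete by Theorem~\ref{theorem:fabrikant}. The source instances have arbitrary positive non-decreasing delay functions, in particular ones with jumps far exceeding any fixed factor $\alpha$; the whole difficulty is to reproduce such large jumps while keeping every delay function $\alpha$-bounded. The idea is to use a negative delay as a large, almost-cancelling offset. I will split each edge $e$ of the source game into a positive edge $p_e$ and a negative edge $q_e$ that are forced to travel together, occurring in exactly the same strategies and hence always carrying the same congestion, so that the effective delay seen by the players is the sum $d_{p_e}+d_{q_e}$, which I arrange to equal $d_e$ exactly. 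Since the costs are then preserved verbatim, $\eps$-equilibria will be preserved as well, and symmetry is trivially maintained.

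Concretely, given a symmetric positive game $G=(n,E,(d_e)_{e\in E},Z)$, let $D=\max_{e\in E} d_e(n)$ and set the offset $L=\lceil D/(\alpha-1)\rceil$; note that $L$ has polynomially many bits and that $\alpha>1$ is used here. The new game $G'$ has edge set $\{p_e,q_e : e\in E\}$, each available strategy $s\in Z$ is replaced by $s'=\bigcup_{e\in s}\{p_e,q_e\}$, and the delays are
\begin{equation*}
d_{p_e}(t)=d_e(t)+L, \qquad d_{q_e}(t)=-L .
\end{equation*}
Then $d_{p_e}$ is a positive integer delay, being at least $L\ge 1$, and $d_{q_e}$ is a negative constant, so $G'$ is non-alternating and symmetric. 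The bounded-jump condition for $q_e$ is immediate, since $-L\le -L/\alpha$. For $p_e$ it reads $d_e(t+1)+L\le \alpha\bigl(d_e(t)+L\bigr)$, that is $d_e(t+1)-\alpha d_e(t)\le(\alpha-1)L$; this holds because the left side is at most $d_e(t+1)\le D\le(\alpha-1)L$ by the choice of $L$. This is the crux of the argument: the large common value $L$ lets the slowly growing positive edge absorb an arbitrarily large jump of $d_e$, the excess being silently cancelled by the constant negative edge.

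It then remains to check correctness. Since $p_e$ and $q_e$ lie in the same strategies, $f_{p_e}(s')=f_{q_e}(s')=f_e(s)$ for the state $s$ corresponding to $s'$, so each player's cost is $\sum_{e\in s_i}\bigl(d_e(f_e(s))+L-L\bigr)=c_i(s)$, identical in $G$ and $G'$. As the available strategies of $G'$ are in cost-preserving bijection with those of $G$, a deviation is an $\eps$-move in $G'$ exactly when the corresponding deviation is an $\eps$-move in $G$; hence $s'$ is an $\eps$-approximate Nash equilibrium of $G'$ if and only if $s$ is one of $G$. Taking $f(G)=G'$ and letting $g$ send $s'$ back to $s$, both polynomial time, yields the required $\PLS$-reduction and establishes $\PLS$-hardness. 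I expect the only delicate point to be the bounded-jump verification for $p_e$ above; everything else is routine, the main thing to confirm being that $L$ stays polynomially bounded, which holds precisely because $\alpha>1$ is a constant.
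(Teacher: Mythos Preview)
Your reduction is correct and in fact simpler than the one in the paper. Both proofs reduce from Theorem~\ref{theorem:fabrikant} and both replace each edge $e$ by a bundle of edges whose delays sum pointwise to $d_e$, so that costs are preserved verbatim and the correspondence between $\eps$-equilibria is immediate. The difference lies in how the bundle is built. The paper splits $e$ into $2n-1$ edges $e_1,e_2^+,e_2^-,\ldots,e_n^+,e_n^-$, where each pair $(e_k^+,e_k^-)$ is a two-level step function encoding the single increment $d_e(k)-d_e(k-1)$, with carefully chosen multipliers $\alpha/(\alpha^2-1)$, $\alpha^2/(\alpha^2-1)$, $1/(\alpha^2-1)$ so that each step is exactly an $\alpha$-jump. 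Your construction instead uses just two edges per $e$: a global additive shift by $L=\lceil D/(\alpha-1)\rceil$ on the positive side, cancelled by a constant $-L$ on the negative side. The large offset flattens all relative jumps of $d_e$ below the factor $\alpha$ in one stroke, which is why the bounded-jump verification reduces to the single inequality $D\le(\alpha-1)L$. What the paper's finer decomposition buys is reusability: the same edge-by-edge, jump-by-jump template is recycled almost verbatim for Theorem~\ref{theorem:flip} on flip games, where each $e_k^\pm$ becomes a flip function. Your shifted delay $d_{p_e}$ is not a flip function in general, so your argument does not transfer to that setting; but for the present theorem it is a cleaner and more elementary proof.
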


\begin{proof}
As stated in Theorem~\ref{theorem:fabrikant} finding an $\eps$-approximate Nash equilibrium in a symmetric and positive congestion game 
is $\PLS$-complete \cite{SV08}. 
We present a $\PLS$-reduction from this problem to the problem of finding 
an $\eps$-approximate Nash equilibrium in a symmetric and non-alternating game
with $\alpha$-bounded jump.

Let $G = (n, E, (d_e)_{e \in E}, Z) $ a symmetric and positive congestion game,
and let $\alpha >1$ be a constant. In our reduction we map $G$ to the symmetric game
$G' = (n, E', (d_{e'})_{e' \in E'}, Z') $ that we define now.
For each $e \in E$, we set 
$E_e=\{ e_1 ,e_{2}^+,e_{2}^-, \ldots , e_{n}^+, e_{n}^- \},$ 
and for every $z \subseteq E$, we define $z' = \bigcup_{e\in z} E_e$ (and therefore
$E' = \bigcup_{e\in E} E_e).$
The set of available strategies is defined as 
$Z' = \{z' : z \in Z\}.$
Finally the delay functions are defined as follows. The delay $d_{e_1}$ is 
simply the constant function  $d_{e}(1)$. For $k \geq 2$, we set

$$ d_{e_{k}^+}(t) = \left\{
    \begin{array}{ll}
       ( d_e(k) - d_e(k-1) ) \frac{\alpha}{\alpha^2-1}  & \mbox{if } t<k, \\
        (d_e(k)-d_e(k-1) ) \frac{\alpha^2}{\alpha^2-1} & \mbox{if } t\geq k,
    \end{array}
\right.	$$	
	and
	
$$ d_{e_{k}^-}(t) = \left\{
    \begin{array}{ll}
        -(d_e(k)-d_e(k-1)) \frac{\alpha}{\alpha^2-1}  & \mbox{if } t<k, \\
        -(d_e(k)-d_e(k-1))\frac{1}{\alpha^2-1} & \mbox{if } t\geq k.
    \end{array}
\right.	
$$
The game $G'$ is clearly non-alternating and with $\alpha$-bounded jump.

Observe that there is a bijection between the states of $G$ and $G'$. Indeed,
the states of $G'$ are of the form $s' = (s_1', \ldots , s_n')$, where 
$s=(s_1, \ldots , s_n) \in Z^n$ is a state of $G$. For the reduction we will simply show that if $s'$ is an
$\epsilon$-approximate Nash equilibrium in $G'$ then $s$ is an $\epsilon$-approximate 
Nash equilibrium in $G$ (our 
construction satisfies also the reverse implication). In fact, we show a stronger statement about cost functions:
for every state $s$, and for every player $i$, the cost of player $i$ for $s$ in $G$ is the same as the cost of
player $i$ for $s'$ in $G'$.

The edges $e_k^+$ and $e_k^-$ are such that the sum of their delay functions emulates the jump $d_e(k)-d_e(k-1)$
when $t \geq k$.
Therefore the sum of the delays corresponding to edges in $E_e$ is just $d_e$ which is expressed in
the following lemma.

\begin{lemma}
\label{lem:costconservation}
For every edge $e \in E$, and for every  $1 \leq t \leq n$,

$$\sum_{e' \in E_e}  d_{e'}(t) = d_e(t).$$

\end{lemma}

\begin{proof}

It is immediate from the definition that the delays satisfy for every $2 \leq k \leq n$, and $1 \leq t \leq n$,

$$ d_{e_{k}^+}(t) + d_{e_{k}^-}(t)= \left\{
    \begin{array}{ll}
        0  & \mbox{if } t<k, \\
        d_e(k)-d_e(k-1) & \mbox{si } t\geq k.
    \end{array}
\right.	$$
Therefore
\begin{eqnarray*}
\sum_{e' \in E_e}  d_{e'}(t) &=& d_e(1) +  \sum_{k=2}^t  ( d_{e_{k}^+}(t) + d_{e_{k}^-}(t) ) \\
&=&   d_e(1) + \sum_{k=2}^{t} ( d_e(k)-d_e(k-1) )\\
&=& d_e(t).
\end{eqnarray*}

\end{proof}
We now claim the following strong relationship between the cost functions in the two games.
\begin{lemma}
\label{lem:profilecostconservation}
For all state $s=(s_1, \ldots , s_n)$ in $G$, and for every player $i$, we have
	
	$$c_i(s') = c_i(s),$$
where $s' = (s_1', \ldots , s_n')$.
\end{lemma}
\begin{proof}
It is easy to verify the following sequence of equalities:
\begin{eqnarray*}
c_i(s') &=&  \sum_{e' \in s_i'} d_{e'}( f_{e'} (s')) \\
&=&   \sum_{e \in s_i}  \sum_{e' \in E_e} d_{e'}( f_{e'} (s')) \\
&=&   \sum_{e \in s_i}  \sum_{e' \in E_e} d_{e'}( f_{e} (s)) \\
&=&   \sum_{e \in s_i}   d_{e}( f_{e} (s)) \\
&=& c_i(s).
\end{eqnarray*}
Indeed the first and last equalities are just the definitions of the cost functions, and the second one is true
by the definition of $s_i'$. The third equality holds because for every edge $e \in E$, every $e' \in E_e$, 
and every player $i$,
the strategy $s_i$ contains $e$ if and only if $s'$ contains $e'$, and therefore $ f_{e'} (s') =  f_{e} (s)$.
The fourth eqality follows from Lemma~\ref{lem:costconservation}.
\end{proof}
By Lemma~\ref{lem:profilecostconservation}  we can
trivially deduce and $\epsilon$-approximate Nash equilibrium for $G$, 
given an $\epsilon$-approximate Nash equilibrium for $G'$. This concludes the proof of the theorem.
\end{proof}

\begin{theorem}
\label{theorem:flip}
For every polynomial time computable $0 < \eps < 1$,
computing an $\epsilon$-approximate Nash equilibrium in $n$-player
symmetric, flip congestion games is $\PLS$-hard.
\end{theorem}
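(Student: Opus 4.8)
The plan is to give a $\PLS$-reduction from the problem of finding an $\eps$-approximate Nash equilibrium in a symmetric, positive congestion game, which is $\PLS$-complete by Theorem~\ref{theorem:fabrikant}, following exactly the template of the proof of Theorem~\ref{theorem:non-alternating}. Starting from a symmetric positive game $G=(n,E,(d_e)_{e\in E},Z)$, I would replace every edge $e\in E$ by a bundle $E_e$ of new edges, each carrying a flip delay function, chosen so that the congestions of all edges in $E_e$ are forced to coincide with $f_e$, and so that the delays inside a bundle add up (up to a harmless global factor) to the original delay $d_e$. Setting $z'=\bigcup_{e\in z}E_e$ and $Z'=\{z':z\in Z\}$ yields a symmetric flip game $G'$ together with a state bijection $s\leftrightarrow s'$, and it then suffices to prove a cost-conservation statement analogous to Lemmas~\ref{lem:costconservation} and~\ref{lem:profilecostconservation}.

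The heart of the construction is to express a positive non-decreasing integer function as a sum of flip functions. Recall that a flip function equals $-c$ below a threshold $k$ and $+c$ at and above it, so it contributes a jump of size $2c$ and, crucially, equals $-c$ (not $0$) below $k$. For each edge $e$ I would put into $E_e$ one constant-positive edge (threshold $k=1$) of magnitude $c_{e,1}=d_e(1)+d_e(n)$, and, for every $2\le k\le n$ with $d_e(k)>d_e(k-1)$, one flip edge of threshold $k$ and magnitude $c_{e,k}=d_e(k)-d_e(k-1)$. A direct computation then gives, for every $1\le t\le n$,
$$\sum_{e'\in E_e} d_{e'}(t)=2\sum_{k\le t}c_{e,k}-\sum_{k}c_{e,k}=2\,d_e(t),$$
since $\sum_{k}c_{e,k}=2\,d_e(n)$ and, by telescoping, $\sum_{k\le t}c_{e,k}=d_e(n)+d_e(t)$. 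All magnitudes $c_{e,k}$ are non-negative integers because $d_e$ is integer-valued and non-decreasing, so no rescaling of $G$ is needed; the factor $2$ appears automatically because every flip function jumps by an even amount.

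From here the argument mirrors the previous proof. Because all edges of $E_e$ are present in $s_i'$ exactly when $e\in s_i$, we have $f_{e'}(s')=f_e(s)$ for $e'\in E_e$, and summing the identity above over $e\in s_i$ gives the cost-conservation relation $c_i(s')=2\,c_i(s)$ for every player $i$ and every state $s$. Since every cost is multiplied by the same positive constant $2$, the defining inequality $c_i(s)-c_i(s_{-i},t)\le \eps\,|c_i(s)|$ holds in $G$ if and only if the corresponding inequality holds in $G'$; hence $s'$ is an $\eps$-approximate Nash equilibrium of $G'$ iff $s$ is one of $G$. As the maps $G\mapsto G'$ and $s'\mapsto s$ are computable in polynomial time, this is the desired $\PLS$-reduction, and $G'$ is by construction a symmetric flip game.

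I expect the only genuine subtlety, and the step to state carefully, to be the representation of $d_e$ by flip functions. Unlike the $e_k^+,e_k^-$ pairs of Theorem~\ref{theorem:non-alternating}, which can be made to vanish below their threshold, a single flip function is negative below its threshold, so the negative tails of all the high-threshold edges must be absorbed. This is exactly what the enlarged base magnitude $c_{e,1}=d_e(1)+d_e(n)$ accomplishes: checking $t=1$ gives $c_{e,1}-\sum_{k\ge 2}c_{e,k}=2\,d_e(1)$, confirming that the base level cancels the negative contributions correctly. The accompanying factor of $2$ is benign precisely because $\eps$-approximate equilibria are invariant under a global positive scaling of all delays, so I would make that scale-invariance explicit rather than attempt to reconstruct $d_e$ exactly.
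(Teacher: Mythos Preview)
Your proof is correct and follows the same reduction template as the paper, but uses a genuinely different decomposition of $d_e$ into flip functions. The paper keeps the bundle $E_e=\{e_1,e_2^+,e_2^-,\ldots,e_n^+,e_n^-\}$ from Theorem~\ref{theorem:non-alternating} and, for $k\ge 2$, sets $d_{e_k^+}$ to be the constant $(d_e(k)-d_e(k-1))/2$ and $d_{e_k^-}$ to be the flip function of that same magnitude with threshold $k$; the pair cancels below $k$ and contributes $d_e(k)-d_e(k-1)$ at and above, so $\sum_{e'\in E_e}d_{e'}(t)=d_e(t)$ exactly and Lemma~\ref{lem:profilecostconservation} carries over verbatim. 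Your version instead uses a single flip edge per nonzero jump with magnitude $d_e(k)-d_e(k-1)$ and absorbs all the negative tails into the enlarged base $c_{e,1}=d_e(1)+d_e(n)$, obtaining $2\,d_e(t)$ and then invoking scale-invariance of $\eps$-equilibria. Your choice has the advantage that all magnitudes are automatically positive integers (the paper's halved jumps need not be integers, which strictly speaking conflicts with the requirement that a flip function have positive \emph{integer} magnitude); the paper's choice has the advantage of exact cost conservation, so no scaling argument is needed. Both routes are short and valid.
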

\begin{proof}
The proof is very similar to the proof of the previous theorem. In the reduction the definition of the game $G'$ is as in
Theorem~\ref{theorem:non-alternating} except for the delay functions, for $2 \leq k \leq n$, which are defined now as 

$$ d_{e_{k}^+}(t) = \left\{
    \begin{array}{ll}
       ( d_e(k) - d_e(k-1) ) /2  & \mbox{if } t<k, \\
        (d_e(k)-d_e(k-1) ) /2 & \mbox{if } t\geq k,
    \end{array}
\right.	$$	
	and
	
$$ d_{e_{k}^-}(t) = \left\{
    \begin{array}{ll}
        -(d_e(k)-d_e(k-1)) /2  & \mbox{if } t<k, \\
        (d_e(k)-d_e(k-1)) /2 & \mbox{if } t\geq k.
    \end{array}
\right.	
$$
These are indeed flip functions, and it is easy to see that Lemma~\ref{lem:costconservation} holds again. This implies,
similarly to Theorem~\ref{theorem:non-alternating}, that the reduction is correct.
\end{proof}


\end{document}